\definecolor{mygray}{rgb}{0.4,0.4,0.4}
\definecolor{Gray}{gray}{0.9}
\definecolor{LightCyan}{rgb}{0.88,1,1}
\newtheorem{lemma}{Lemma}
\newtheorem{defn}{Definition}
\newcommand{\E}{\mathcal E}
\newcommand{\V}{\mathcal V}
\newcommand{\G}{\mathcal G}
\newcommand{\N}{\mathbb N}
\newcommand{\map}{\mathrm{map}}
\newcommand{\Rmap}{D_{\mathrm{map}}}
\newcommand{\tRmap}{C_{\mathrm{map}}}
\newcommand{\Image}{\mathrm{Image}}
\newcommand{\Emap}{E_{\mathrm{map}}}
\newcommand{\ERmap}{ER_{\mathrm{map}}}
\title{Connected Components in Undirected Set--Based Graphs. Applications in Object--Oriented Model Manipulation.}
\author{Ernesto Kofman, Denise Marzorati and Joaquín Fernández\\
CIFASIS-CONICET, FCEIA-UNR, Argentina}
\date{}
\begin{document}
\maketitle



\begin{abstract}
This work introduces a novel algorithm for finding the connected components of a graph where the vertices and edges are grouped into sets defining a \emph{Set--Based Graph}. The algorithm, under certain restrictions on those sets, has the remarkable property of achieving constant computational costs with the number of vertices and edges. The mentioned restrictions are related to the possibility of representing the sets of vertices by intension and the sets of edges using some particular type of maps. While these restrictions can result strong in a general context, they are usually satisfied in the problem of transforming connections into equations in object oriented models, which is the main application of the proposed algorithm.

Besides describing the new algorithm and studying its computational cost, the work describes its prototype implementation and shows its application in different examples.
\end{abstract}



\section{Introduction}
Finding the connected components of an undirected graphs is a classic problem of Graph Theory that is employed in several application domains. Simple algorithms that solve this problem in linear time with the number of vertices have been known since several decades ago \cite{hopcroft1973algorithm}. Also, parallel algorithms that can solve the problem in logarithmic time have been known for long time \cite{hirschberg1979computing}.

One particular problem that requires finding the connected components of a graph is that of flattening the equations of object oriented models \cite{fritzson1998modelica}, which is part of the first stage of the compilation process. There, different sub-models are related by \emph{connectors} and the connections must be replaced by equations where sum of all connected variables of certain type must be zero. While solving the problem in linear time may be affordable in several situations, there are models that result of the coupling of thousands of small sub-models where the cost can become prohibitive. Moreover, even if the problem is solved in a reasonable amount of time, the resulting system of equations can be so large that it is intractable by the subsequent stages of the compilation process. 

Fortunately, large models often contain several repetitive connections that are the result of using \texttt{for} statements and this is a fact that can be exploited to reduce the computational cost of the different compilation stages \cite{arzt2014towards, schuchart2015exploiting, casella2015simulation, bergero2015efficient, stavaaker2015contributions, braun2017solving, qin2016efficient, agosta2019towards, schweiger2020modeling}. However, the possibility of exploiting the presence of repetitive or regular structures at each stage requires that the previous stages had kept a compact representation. 
While there are some experimental implementations that in some particular cases can keep a compact representation during the whole compilation process \cite{bergero2015efficient}, there is not yet a general solution. 

Regarding the flattening stage, a general solution would require to find the sets of connected connectors which may be part of multidimensional arrays, solving the problem without actually expanding those arrays into individual connectors. This problem is equivalent to find the connected components of an undirected graph while keeping some sets of vertices and edges grouped together, which constitutes the main goal of the present work.

The problem of manipulating large graphs grouping vertices and edges into sets to produce compact systems of equations was recently proposed with the introduction of \emph{Set--Based Graphs} \cite{ZFK19}. There, a compact solution for the problems of maximum matching and finding strongly connected components in directed graph for equation sorting was proposed and implemented as part of the prototype ModelicaCC compiler \cite{bergero2015efficient}.

In this work, we use the same tool (Set-Based Graphs) and propose a general algorithm for finding connected components in undirected graphs. We show that, under certain assumptions, the computational cost of the algorithm becomes independent on the size of the sets of vertices and edges (i.e., the algorithm has a constant computational cost with the number of vertices and edges). In consequence, the cost of generating the set of equations in the flattening stage results independent on the size of the arrays of connectors.

Besides introducing and analyzing the algorithm, we also describe a prototype implementation in GNU Octave \cite{eaton1997gnu}. In addition, we analyze three examples (incuding a multidimensional one) showing the efficiency of the novel procedure. 

The paper is organized as follows. After this introduction we briefly present a problem that motivates the work. Then, Section~\ref{sec:background} introduces some concepts and previous works that are used as the basis of the main results, presented in Section~\ref{sec:main}. The prototype implementation of the algorithm is described in Section~\ref{sec:implem} and its usage for flattening connections is discussed in Section~\ref{sec:flattening}. Finally, Section~\ref{sec:examples} introduces some examples and Section~\ref{sec:conclusions} concludes the article.

\subsection{Motivation}
This work was motivated by a problem that appears in Modelica compilers. Modelica models can be represented by the coupling of several sub-models where the coupling is usually made using \emph{connectors}. That way, the equations representing the structure of the circuit of Figure~\ref{fig:rc_network} can be represented by the piece of code in Listing~\ref{list:modelicaCode}.

\begin{figure}[h]
 \centering
 \includegraphics[width=8cm]{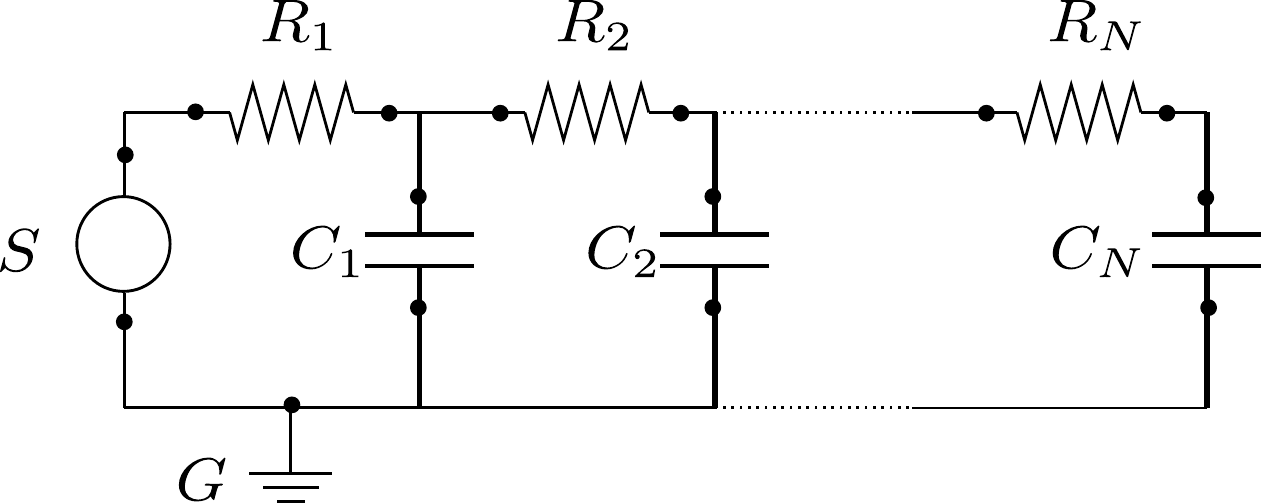}
 \caption{RC network}
 \label{fig:rc_network}
\end{figure}

\begin{scriptsize}
\begin{lstlisting}[language=Modelica,frame=single,caption=Modelica connections,label=list:modelicaCode]
connect(S.p,R[1].p);
connect(S.n,G.p);
for i in 1:N-1 loop
  connect(R[i].n, R[i+1].p);
end for;
for i in 1:N loop
  connect(C[i].p, R[i].n);
  connect(C[i].n, G.p);
end for;
\end{lstlisting}
 \end{scriptsize}

The connectors (\texttt{S.p}, \texttt{S.n}, etc) have two types of variables: \emph{effort} variables that are equal to each other after being connected and \emph{flow} variables whose sum is zero for all connected connectors. Thus, the resulting equations for the structure of Listing~\ref{list:modelicaCode} would be that of Listing~\ref{list:modelicaEquations}

\begin{scriptsize}
\begin{lstlisting}[language=Modelica,frame=single,caption=Modelica connections,label=list:modelicaEquations]
S.p.effort=R[1].p.effort;
S.p.flow+R[1].p.flow=0;
S.n.effort=G.p.effort;
S.n.flow+G.p.flow+sum(C.n.flow)=0;
for i in 1:N-1 loop
  R[i].n.effort=R[i+1].p.effort;
  C[i].p.effort=R[i].n.effort;
  R[i].n.flow+R[i+1].p.flow+C[i].p.flow=0;
end for;
C[N].p.effort=R[N].n.effort;
R[N].n.flow+C[N].p.flow=0;
\end{lstlisting}
\end{scriptsize}

The translation from connections to equations requires finding connected components in a graph where the vertices are the connectors (\texttt{S.p}, \texttt{S.n}, etc.) and the edges are defined by the presence of \texttt{connect} statements between the corresponding connectors.

Modelica compilers solve this problem by first expanding the \texttt{for} statements and the arrays of connectors and then finding the connected components and producing the equations as part of a process known as \texttt{flattening}. The result of this process in a model like that of Listing~\ref{list:modelicaCode} is a large piece of code without the \texttt{for} statements of Listing~\ref{list:modelicaEquations}. In addition, the cost of producing that code is at least linear with the size of the arrays involved ($N$ in the above example).

When $N$ is large (starting from $10^4$ or $10^5$) the computational costs become huge, and the length of the code produced may become intractable for the successive stages of the compilation process. Thus, we expect that the algorithms developed in this work provide a general solution for this problem as well as for other problems that require a compact and efficient connected components analysis in presence of some repetitive or regular structures.

\section{Background} \label{sec:background}
In this section we present some previous results and tools that are used along the rest of the paper.

\subsection{Modelica and Equation-Based Object-Oriented Modeling Languages}
In an effort to unify the different modeling languages used by the different modeling and simulation tools, a consortium of software companies and research groups proposed an open unified object oriented modeling language called \emph{Modelica} \cite{fritzson1998modelica,Fritzson2015}, that in the last two decades was progressively adopted by different modeling and simulation tools.

Modelica allows the representation of continuous time, discrete time, discrete event and hybrid systems. Elementary Modelica models are described by sets of differential and algebraic equations that can be combined with algorithms specifying discrete evolutions. These elementary models can be connected to other models to compose more complex models, facilitating the construction of multi--domain models.  

Modelica models can be built and simulated using different software tools. OpenModelica \cite{fritzson2006openmodelica} is the most complete open source package, while Dymola \cite{BEMH02} and Wolfram System Modeler are the most used commercial tool. There are also some prototype tools oriented to different problems, such as JModelica \cite{aakesson2009jmodelica} (for optimization problems) and ModelicaCC. 

The simulation of Modelica models requires a previous compilation, that transforms the object oriented model description into a piece of code (usually in C language) containing a set of ordinary differential equations (ODE) or differential algebraic equations (DAE) that can be solved by an appropriate ODE or DAE solver. The compilation process is usually divided in several stages: flattening, alias removal, index reduction, equation sorting, and final code generation. 

All Modelica compilers by default expand the arrays and unroll the \texttt{for loop} cycles in the first step of the compilation process. In consequence, in presence of large arrays, the computational cost of the compilation and the length of the produced code can become huge and the tools are unable to simulate systems with more that about $10^5$ state variables. While there are some experimental implementations that avoid expanding and unrolling  \cite{bergero2015efficient,pop2019new}, there is not yet a general solution.

\subsection{Connected Components in Undirected Graphs}
Finding the connected components of an undirected graph is a simple problem for which there are hundreds of algorithms. Linear time algorithms have been known since a long time ago \cite{hopcroft1973algorithm}, and there are also several parallel algorithms that can reduce the costs to logarithmic time. Among them, we shall briefly describe that of \cite{hirschberg1979computing}, which has certain features in common with the algorithm that constitutes the main result of this work. 

This algorithm represents the connected components using a vector $D$ of length $n$ (the number of vertices in the graph) such that $D(i)$ contains the smallest numbered vertex in the connected component to which $i$ belongs. A version of this procedure is described in Algorithm~\ref{alg:hirsch}, where we consider that a graph $G=(V,E)$ is given with a set of vertices $V=\{1,2,\ldots,n\}$ and a set of edges $E=\{e_1,\ldots,e_m\}$ with $e_k=\{i,j\}$ where $i,j\in V$. 

\begin{algorithm}[h]
\begin{algorithmic}[1]
\Function{Connect}{$V,E$} \Comment{All the steps are performed in parallel for all $i\in V$}
 \State $D(i)\gets i$ for all $i\in V$.
 \For{$it_1=1:\log_2(n)$}
   \State $C(i) \gets \min_j \left (D(j) | \{C(i),D(j)\}\in E \wedge D(j)\neq D(i)\right )$, if none then $D(i)$, for all $i\in V$ \label{li:collapse}
   \State $C(i) \gets \min_j \left (C(j) | D(j)=i \wedge C(j)\neq i\right )$, if none then $D(i)$, for all $i\in V$
   \State $D(i)\gets C(i)$ for all $i\in V$.
    \For{$it_2=1:\log_2(n)$}
      \State $C(i)\gets C(C(i))$ for all $i\in V$. \label{li:selfmap}
    \EndFor
   \State $D(i)\gets \min(C(i),D(C(i)))$ for all $i\in V$.    
 \EndFor
 \State \Return{$D$}
\EndFunction
 \end{algorithmic}
 \caption{Connected Components of \cite{hirschberg1979computing}}
 \label{alg:hirsch} 
\end{algorithm}

The details and the explanation of this algorithm is given in \cite{hirschberg1979computing}. The algorithm we shall develop will use a very similar idea to represent the connected components (with a more general idea of the vertex numbering) and we shall also use an auxiliary vector like $C(i)$ with a similar idea for merging the components in step \ref{li:collapse} and applying the map into itself like in step \ref{li:selfmap} until all the members of a component point to the same root vertex. 

\subsection{Set--Based Graphs}
The algorithms presented in this work are based on the use of \emph{Set-Based Graphs} (SB-Graphs), first defined in \cite{ZFK19}. SB-Graphs are regular graphs in which the vertices and edges are grouped in sets allowing sometimes a compact representation. We introduced next the main definitions.

\begin{defn}[Set--Vertex]
  A \emph{Set--Vertex} is a set of vertices $V=\{v_1,v_2,\ldots,v_n\}$.
\end{defn}

\begin{defn}[Set--Edge]
 Given two  Set--Vertices, $V^a$ and $V^b$, with $V^a\cap V^b=\emptyset$, a Set--Edge connecting $V^a$ and $V^b$ is a set of non repeated edges  $E[\{V^a,V^b\}]=\{e_1,e_2,\ldots,e_n\}$ where each edge is a set of two vertices $e_i=\{v^a_k \in V^a,v^b_l \in V^b\}$.
\end{defn}

\begin{defn}[Set--Based Graph]
  A Set--Based Graph is a pair $\G=(\V,\E)$ where
  \begin{itemize}
   \item $\V=\{V^1,\ldots,V^n\}$ is a set of disjoint set--vertices (i.e., $i\neq j \implies V^i\cap V^j=\emptyset$).
   \item $\E=\{E^1,\ldots,E^m\}$ is a set of set--edges connecting set--vertices of $\V$, i.e., $E^i=E[\{V^a,V^b\}]$ with $V_a\in \V$ and $V_b\in \V$. In addition, given two set edges  $E^i, E^j\in \E$ with $i\neq j$, such that $E^i=E[\{V^a,V^b\}]$ and $E^j=E[\{V^c,V^d\}]$, then $V^a\cup V^b \cup V^c \cup V^d\neq V^a\cup V^b$. This is, two different set--edges in $\E$ cannot connect the same set--vertices.
   \end{itemize}
\end{defn}

As in regular graphs, we can define bipartite Set--Based Graph and directed Set--Based Graphs. An algortihm for matching in bipartite Set--Based Graph and an algorithm for finding the strongly connected components of a directed Set--based Graph were recently presented in \cite{ZFK19}. 

An SB-Graph $\G=(\V,\E)$ defines an equivalent regular graph $G=(V,E)$ where $V=\bigcup V^i\in \V$ and $E=\bigcup E^i \in \E$. Thus, a SB--Graph contains the same information than a regular graph. However, SB-Graphs can have a compact representation of that information provided that every set--edge and every set-vertex is defined by \emph{intension}.

\section{Main Results} \label{sec:main}
This section introduces the main result of the article. We first introduce a simple but inefficient algorithm for finding the connected components of regular graphs. Then we show that this algorithm, in the context of Set--Based Graphs, can be implemented using compact operations on some sets and maps leading to computational costs that, under certain circumstances, become independent on the number of vertices and edges.

\subsection{An Inefficient Algorithm for Regular Graphs}
We present first an algorithm for computing the connected components in a regular graph $G=(V,E)$. The proposed algorithm finds a collection of connected components represented in a similar way to that Algorithm~\ref{alg:hirsch}. In particular:
\begin{itemize}
 \item We assume that there exists a total ordering between all individual vertices (they could be represented by integer numbers, by arrays of integer numbers, etc).
 \item Each connected component is represented by one of its vertices $v_k\in V$, which is the smallest vertex of the connected component.
 \item There is a map $\Rmap:V \to V$ such that $\Rmap(v_r)=v_k$ implies that the vertex $v_r \in V$ is part of the connected component represented by $v_k$. 
 \item Since the representative $\Rmap(v_r)$ is the minimum vertex on the connected component, then $\Rmap(v_r)\leq v_r$ for all $v_r\in V$. 
\end{itemize}

Making use of this representation, Algorithm~\ref{alg:main_map} finds the connected components represented by $\Rmap$ of an arbitrary graph $G=(V,E)$.

\begin{algorithm}[h]
\begin{algorithmic}[1]
\Function{Connect}{$V,E$}
  \State{$\Rmap\gets \mathrm{Identity_{map}}:V\to V$} \Comment{All vertices are initially disconnected}
  \State{$I_{\mathrm{old}} \gets \emptyset$} \Comment{Previous image set of $\Rmap$}
  \While{$I_{\mathrm{old}} \neq \Image(\Rmap)$}
    \State{$\tRmap\gets \Rmap$} \Comment{New map of connected components}
    \ForAll{$v_r\in \Image(\Rmap)$} \Comment{Component represented by $v_r$} \label{li:for}
      \If{$\exists \{v_r,v_s\}\in E$} \Comment{$v_r$ is not an isolated vertex}
	\State{$v_k\gets \min(\Rmap(v_b):(\{v_a,v_b\}\in E \wedge \Rmap(v_a)=v_r))$} \Comment{Minimum component connected to the component represented by $v_r$}
	\If{$v_k<v_r$}
	  \State{$\tRmap(v_r)\gets v_k$} \Comment{Connect components represented by $v_r$ and $v_k$}
	  \State{$\tRmap(v_a) \gets \tRmap\circ \tRmap(v_a)=\tRmap(v_r)=v_k$ for all $v_a:\tRmap(v_a)=v_r$} \Comment{All components represented by $v_r$ are now represented by $v_k$} \label{li:selfit}
	\EndIf
      \EndIf
    \EndFor \label{li:endfor} 
    \State{$I_{\mathrm{old}} \gets \Image(\Rmap)$} \Comment{Image of the previously connected components}
    \State{$\Rmap\gets \tRmap$} \Comment{New map of connected components}
  \EndWhile
  \State \Return{$\Rmap$}
\EndFunction  
\end{algorithmic}
\caption{Connected Components} \label{alg:main_map}
\end{algorithm}

The algorithm works as follows. It starts assuming that all vertices are disconnected so they represent their own connected component. Then, it iterates until the image of $\Rmap$ becomes constant, meaning that no further components can be connected. 

During each iteration a new map $\tRmap$ is computed by adding connections between components. For each component represented by $v_r$, the algorithm takes into account all the edges connecting vertices of this component. Among all these edges, it takes the one that connects to certain vertex $v_b$ with the least representative $v_k=\Rmap(v_b)$ (it could happen that $v_k=v_r$ if there is no connection from the component represented by $v_r$ to any component represented by a smaller vertex).  Then, if the representative $v_k$ is smaller than $v_r$, the algorithm connects both components by making $\tRmap(v_r)=v_k$. In that case, it also reconnects all the vertices that were connected to $v_r$ such that they are now connected to $v_k$.

Although it could be easily proved that the procedure is correct, it is possibly one of the less efficient algorithms one can imagine to find connected components in a graph. Its computational cost appears to grow at least quadratically with the number of vertices and edges. However, we shall see next that in the context of Set--Based Graph this algorithm can be implemented in a way that the costs become independent on the size of the different sets involved.

A key feature of the algorithm above that will allow this simplification is that in each iteration $\tRmap$ is computed as a function of the complete map $\Rmap$ and vice-versa. That way, both maps can be entirely computed from each other in simple steps. 

\subsection{Set--Based Graph Algorithm}
The goal of using Set--Based Graph is to exploit the presence of repeating regular structures along the graph, representing the different sets by intension. While the definitions of SB--Graphs do not explicitly establish this, we propose next a simple way of representing the set edges that allows the intensive  treatment of the graph.

Let $E^h$ be a set-edge connecting $V^i$ and $V^j$. We shall characterize this set--edge using two maps that relate the individual edges $e^h_k\in E$ with the vertices it connects $v^i_r=\mathrm{map}^{h,i}(e^h_k)$ and $v^j_s=\mathrm{map}^{h,j}(e^h_k)$. This is, the set edge is compactly defined as
\begin{equation*}
  E^h=\bigcup_k \{v^i_r=\mathrm{map}^{h,i}(e^h_k) , v^j_s=\mathrm{map}^{h,j}(e^h_k)\}.
\end{equation*}
Thus, provided that there is a compact expression for these maps and that the set-vertices are represented by intension, the complete SB--Graph has a compact representation. 

Using this representation of an SB--Graph, the previous algorithm can be reformulated as proposed in Algorithm~\ref{alg:connect_SB}.

\begin{algorithm}[h]
\begin{algorithmic}[1]
\Function{ConnectSBG}{$\V,\E$}
  \State{$V\gets \bigcup V^i\in \V$} \Comment{Set of all vertices}
  \State{$(\Emap^1,\Emap^2) \gets \mathrm{edgeMaps}(\E)$} \Comment{Left and right maps from edges to vertices}
  \State{$\Rmap\gets \mathrm{Identity_{map}}:V\to V$} \Comment{All vertices are initially disconnected}
  \State{$I_{\mathrm{old}} \gets \emptyset$} \Comment{Previous image set of $\Rmap$}
  \While{$I_{\mathrm{old}} \neq \Image(\Rmap)$} \label{li:while}
    \State{$\ERmap^1 \gets \Rmap\circ \Emap^1$} \Comment{Left map from edges to connected components}
    \State{$\ERmap^2 \gets \Rmap\circ \Emap^2$} \Comment{Right map from edges to connected components}
    \State{$\tRmap^1\gets \mathrm{minAdjMap}(\ERmap^1,\ERmap^2)$} \Comment{Map from components to least components via $\Emap^2$}
    \State{$\tRmap^2\gets \mathrm{minAdjMap}(\ERmap^2,\ERmap^1)$} \Comment{Map from components to least components via $\Emap^1$}
    \State{$\tRmap\gets \min(\Rmap,\tRmap^1,\tRmap^2)$} \Comment{Map from components to least components}
    \State{$I_{\mathrm{old}} \gets \Image(\Rmap)$} \Comment{Image of the previously connected components}
    \State{$\Rmap\gets (\tRmap)^{\infty}$} \Comment{New map of connected components}
  \EndWhile \label{li:endwhile}
  \State{\Return{$\Rmap$} }
\EndFunction  
\end{algorithmic}
\caption{Connected Components with SB--Graphs} \label{alg:connect_SB}
\end{algorithm}

In this new algorithm, we made use of the following functions and notation:
\begin{itemize}
 \item Function $\mathrm{edgeMaps}(\E)$ returns two maps: a map of left connections $\Emap^1:E\to V$ and a map of right connections $\Emap^2:E\to V$, defined as follows. For each set--edge $E^h\in\E$ connecting set vertices $V^i$, $V^j$, the maps $\Emap^{1,2}$ satisfy
 \begin{equation*}
  \begin{split}
    \Emap^1(e_k^h)&=map^{h,i}(e_k^h) \forall e_k^h \in E^h\\   
    \Emap^2(e_k^h)&=map^{h,j}(e_k^h) \forall e_k^h \in E^h   
  \end{split}
 \end{equation*}
  Notice that for each set edge, there are two possible definitions of $\Emap^1$ and $\Emap^2$, according to which one is associated with $i$ and which one with $j$ (the set--edges are non--directed).
  
  \item Function $\mathrm{minAdjMap}(\map_1,\map_2)$ computes a map $\map_3$ such that
  \begin{equation} \label{eq:minAdjMap}
     \map_3(v)=\min(\map_2(e):\map_1(e)=v) 
  \end{equation}
  
  In the context of this algorithm, $v$ is a representative vertex and $e$ is an edge. Thus, for all edges such that $\map_1(e)=v$, the function takes the one for which $\map_2(e)$ is minimum and defines $\map_3(v)=\map_2(e)$. That way, $\map_3(v)$ is the least representative vertex connected via $\map_2$ to a vertex represented by $v$. 
  
  In the algorithm, the function is invoked twice with the inverted arguments in order to find the least representative connected to a component via both maps.  
  
  \item The notation $(\tRmap)^{\infty}$ is the result of applying $\tRmap$ on itself until arriving to a fixed point. 
\end{itemize}

The algorithm is almost identical to the previous one, except that the iteration of $\tRmap$ on itself (step~\ref{li:selfit} in Algorithm~\ref{alg:main_map}) is now performed at the end of the cycle. The convergence of this new iteration is ensured by the fact that $\tRmap$  is always less or equal than the identity map and that its domain is finite ($V$).

\subsection{About the Computational Costs}
We shall see in the next section that, under certain assumptions on the definition of the maps, all the steps involved in this new algorithm can be computed by intension (including the infinite iteration of $\tRmap$ on itself). Then, the computational cost of each iteration of the algorithm (steps \ref{li:while}--\ref{li:endwhile}) becomes independent on the size of the sets. 

Regarding the number of iterations that are actually needed until all components are connected, the following result establishes an upper bound.

\begin{lemma} \label{lem:itbound}
 The numbers of iterations required to find all connected components is at most $2\log_2(N)$, where $N$ is the number of edges in the largest connected componet. 
\end{lemma}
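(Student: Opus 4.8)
The plan is to analyze how the representative map $\Rmap$ evolves across iterations of the while loop and show that the ``depth'' of each connected component, measured appropriately, is at least halved in each iteration. First I would fix attention on a single connected component $\mathcal{C}$ and think of the intermediate state of the algorithm as inducing a forest structure on the vertices of $\mathcal{C}$: at any point, each vertex $v$ points to $\Rmap(v)$, and since $\Rmap(v)\leq v$ with equality only at the representative (the minimum vertex), the relation $v\mapsto \Rmap(v)$ defines a rooted forest whose roots are exactly the vertices in $\Image(\Rmap)\cap \mathcal{C}$. The quantity I would track is the number of distinct roots surviving in $\mathcal{C}$, i.e. $|\Image(\Rmap)\cap \mathcal{C}|$, and I expect to show this shrinks geometrically until it reaches $1$ (full connectivity of the component).

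The key step is to argue a factor-two contraction per iteration. In a single pass, $\tRmap$ is built so that every surviving root $v_r$ is mapped to the minimum representative $v_k$ reachable by a single edge leaving the component-so-far represented by $v_r$; this is precisely what $\mathrm{minAdjMap}$ computes via \eqref{eq:minAdjMap}. So each root either stays fixed (if it is already the local minimum among its neighbors' representatives) or is absorbed into a strictly smaller root. The standard pointer-jumping / hooking argument then says: order the surviving roots; each non-minimal root hooks onto some strictly smaller root, so after applying $\tRmap$ and then contracting via the fixed-point iteration $(\tRmap)^{\infty}$, the roots collapse along these hooks. Because every root that is not a local minimum is eliminated, and the local minima form an independent set in the ``root graph'' (the graph on surviving roots with an edge whenever a set--edge connects their components), the number of roots can drop by a factor governed by that independent-set structure. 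I would make this precise by observing that in the root graph every connected piece with $k>1$ roots loses at least half its roots in one iteration, giving $|\Image(\Rmap)\cap\mathcal{C}|$ after the step at most $\lceil k/2\rceil$.

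I would then close the loop on the bound. Starting from $N+1$ vertices (or, as the statement phrases it, with $N$ edges spanning the component, giving at most $N+1$ vertices and hence initially at most $N+1$ roots), halving each iteration yields full collapse to a single root after at most $\log_2(N+1)\approx \log_2(N)$ halving steps. The extra factor of two in the claimed bound $2\log_2(N)$ I would attribute to the two-phase nature of each while iteration: the hooking in $\tRmap$ and the pointer-jumping fixed-point $(\tRmap)^{\infty}$ do not by themselves guarantee that both the ``left'' and ``right'' directions of the undirected edges get exploited in a single halving, so a conservative accounting charges up to two iterations per factor-of-two reduction in root count. I would therefore phrase the induction so that each pair of consecutive iterations at least halves the number of roots, yielding the $2\log_2(N)$ bound.

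The main obstacle I anticipate is proving the factor-two contraction cleanly rather than hand-waving it. The delicate point is that $\tRmap=\min(\Rmap,\tRmap^1,\tRmap^2)$ followed by $(\tRmap)^{\infty}$ is a pointer-jumping step, and the classical Hirschberg-style analysis must be adapted to the ``hooking to the minimum neighbor'' rule used here. Specifically, I must rule out the pathological case where a long chain of roots hooks sequentially ($v_1\to v_2\to\cdots$) in a way that defeats halving; the safeguard is that each root hooks directly onto its \emph{minimum-representative} neighbor, not an arbitrary one, which forces the surviving roots to be exactly the ones that are minimal among their neighbors and thus prevents such chains from persisting after the fixed-point contraction. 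Establishing this ``minimum-hooking eliminates at least half'' lemma rigorously, together with verifying that the fixed-point iteration $(\tRmap)^{\infty}$ indeed contracts each hooking tree to its root in the same iteration, is where the real work lies.
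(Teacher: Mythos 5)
Your high-level framing (track the number of surviving roots per connected component and show geometric decrease) matches the paper's, but the key quantitative claim you rest it on is false, and your explanation of the factor $2$ points at the wrong mechanism. You assert that in the root graph every connected piece with $k>1$ roots loses at least half its roots in one iteration. The surviving roots after one iteration are exactly the local minima of the root graph, and local minima form an independent set, as you say --- but an independent set in a connected graph on $k$ vertices can have $k-1$ elements. Concretely, take roots $1,2,3,4,5$ where the root graph is a star with center $5$ and leaves $1,2,3,4$: root $5$ hooks to $1$, while roots $1,\dots,4$ each see only the neighbor $5$ with representative $5$ larger than themselves, so none of them hooks. One iteration takes $5$ roots to $4$, not to $\lceil 5/2\rceil$. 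So per-iteration halving fails, and the factor $2$ in the bound is not slack to be ``attributed'' to left/right asymmetry of the edge maps --- it is essential.

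The paper's argument supplies exactly the missing step. If a root $v_r$ with external connections fails to hook in some iteration, it is because every adjacent component has a representative $v_{s_i}>v_r$; but then each such $v_{s_i}$ sees $v_r$ as a neighbor and therefore hooks in that same iteration to some representative $v_{t_j}\le v_r$, in fact $v_{t_j}<v_{s_i}$ and $v_{t_j}\le v_r$, so in the \emph{next} iteration $v_r$ has a neighbor with representative at most $v_r$ (strictly smaller, or $v_r$ absorbs others) and merges. Hence every non-isolated root participates in a merge within two iterations, so the root count at least halves every \emph{two} iterations, giving $2\log_2(N)$. Your closing paragraph gestures at ``each pair of consecutive iterations at least halves the number of roots,'' which is the right statement, but the mechanism you propose to prove it (exploiting both directions of the undirected edges) is not the reason; the reason is the wait-one-iteration phenomenon exhibited by the star example above. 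Replacing your halving lemma with this two-iteration merging argument closes the gap.
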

\begin{proof}
 Suppose that after certain number of iterations $k$, a component represented by $v_r$ contains one or more connections to other components represented by $v_{s_1}$, $v_{s_2}$, etc. Suppose also that during the next iteration the component represented by $v_r$ is not connected to any of those components.
 
 If that occurs is because $v_r<v_{s_i}$ (otherwise it would be connected to the component represented by the minimum $v_{s_i}$). In addition, the components represented by $v_{s_i}$ will be connected in that iteration to some components represented by $v_{t_j}<v_r$ (otherwise, they would be connected to the component represented by $v_r$). Then, in the following iteration, $v_r$ will have connections to components represented by $v_{t_j}<v_r$ and it will be connected to the least $v_{t_j}$.
 
 Thus, every component containing connections to other components is always connected after a maximum of two iterations. It means that after two iterations the number of different components that will be part of the same connected component is reduced at least to the half and they will be reduced to a single component after at most $2\log_2(N)$ iterations.
\end{proof}

This lemma tells that the number of iterations (and so the computational costs) of the algorithm may actually depend on the size of the sets. However, in several cases it does not:

\begin{enumerate}
 \item When the structure is such that each connected component can only have a bounded number of vertices (independently of the size of the set-vertices). 
 \item When the latter condition is not accomplished by some connected components, but each connected component can be split in two components: the first one verifying the previous condition and the second one having all its vertices disconnected among them but connected to some vertices of the first component.
 \item When the second component of the previous case does have connections among its vertices, but the connections follow an order: A connection between ($v_{r_1}-v_{r_2}-v_{r_3}-\ldots-v_{r_p}$), implies that $v_{r_1}<v_{r_2}<v_{r_3}<\ldots<v_{r_p}$. 
\end{enumerate}

The independence of the computational costs with the size of the sets in the first case is ensured by Lemma~\ref{lem:itbound}. 

In the second case, the fact that the \emph{large} set of edges has only connections to the small set of edges implies that in at most two iterations the edges of the large set will be connected to the edges of the small set (the reason for that can be found in the proof of Lemma~\ref{lem:itbound}). After that, the number of components is reduced to a quantity that is independent on the size of the sets and so is the number of additional iterations.

In the third case, each connection of the form $v_{r_1}-v_{r_2}-v_{r_3}-\ldots-v_{r_p}$ with $v_{r_1}<v_{r_2}<v_{r_3}<\ldots<v_{r_p}$ produces that all the components get connected in a single iteration of the algorithm (unless they are first connected to the small set of components). Then, in either situation, the case reduces to  the situation analyzed in the previous case.

In conclusion, the only situation in which a large number of iterations would be required is under the presence of a large connected component resulting from a large non--ordered set of connections. Yet, that would be only possible when the maps that define the set edges have some irregular definition.  

\section{Implementation} \label{sec:implem}
Algorithm \ref{alg:connect_SB} was implemented in a prototype library of Octave for Set--Based Graphs. The library defines four basic classes: \texttt{Interval}, \texttt{Set}, \texttt{Map}, and \texttt{SBGraph} and different operations involving them. We describe next their main features.

\subsection{Intervals}
A unidimensional interval is represented by three natural numbers: \texttt{Interval.start}, \texttt{Interval.step}, and \texttt{Interval.end}. For instance, the sequence $[3,5,7,\ldots,199]$ is compactly represented by \texttt{start}=3,  \texttt{step}=2, and \texttt{end}=199 (we shall simply denote it by $[3:2:199]$).

A general interval of dimension $d$ is represented by three arrays of length $d$: \texttt{Interval.start}($1:d$), \texttt{Interval.step}$(1:d)$, and \texttt{Interval.end}$(1:d)$. For instance, the sequence
\begin{equation*}
 [(1;1), (1;2), \ldots, (1;100), (4;1), (4;2), \ldots, (4;100), \ldots, (1000;1), (1000;2), \ldots, (1000;100)]
\end{equation*}
is represented by $\texttt{start}(1)=1$, $\texttt{step}(1)=1$, $\texttt{end}(1)=100$, $\texttt{start}(2)=1$, $\texttt{step}(2)=3$, $\texttt{end}(2)=1000$. We shall denote it by $[1:1:100]\times [1:3:1000]$. 

On these intervals we defined some basic functions and operations used by the higher level class that defines sets.

\subsection{Sets}
A set is defined as an array of disjoint intervals of the same dimension. This is, $\texttt{Set.Interval}(1)$ contains the first interval, $\texttt{Set.Interval}(2)$ contains the second interval, etc. For instance, the set
\begin{equation*}
 S=\{2,4,6,\ldots,100\} \cup \{101,102,\ldots,200\}
\end{equation*}
is represented by an array of two intervals: $[2:2:100]$ and $[101:1:200]$ and we shall denoted it as $S=\{[2:2:100]\} \cup \{[101:1:200]\}$.

On the set class, we defined some functions and operators, including the basic operations \texttt{setUnion}, \texttt{setIntersection}, and \texttt{setMinus}. All the operations are computed by \emph{intension} using only the \texttt{start}, \texttt{step} and \texttt{end} values of the underlying intervals, and the result is another set represented by intervals. That way, the cost of the operations is independent on the size of the intervals involved.

\subsection{Maps}
A one dimensional linear map is defined by two rational numbers: \texttt{linearMap.gain} (which cannot be negative) and  \texttt{linearMap.offset}. Similarly, a general $d$--dimensional linear map is defined by two arrays of length $d$ \texttt{linearMap.gain}$(1:d)$, and \texttt{linearMap.offset}$(1:d)$.

A \texttt{Map} is then defined by an array of disjoint sets \texttt{Map.domain}$(1:M)$ and an array of linear maps \texttt{Map.linearMap}$(1:M)$, where all the sets and linear maps have the same dimension. For instance, a map like
\begin{equation*}
 i=\begin{cases}
    j+3 & \text{for~} j \in \{1,2,\ldots,100\}\\
    100 & \text{for~} j \in \{101,103,\ldots,199\}\\
    j/2 & \text{for~} j \in \{102,104,\ldots,200\}\\   
   \end{cases}
\end{equation*}
is defined by 
\begin{itemize}
 \item \texttt{Map.domain(1)}=$\{1:1:100\}$, \texttt{Map.linearMap(1).gain}=1, \texttt{Map.linearMap(1).offset}=3
 \item \texttt{Map.domain(2)}=$\{101:2:199\}$, \texttt{Map.linearMap(2).gain}=0, \texttt{Map.linearMap(2).offset}=100
 \item \texttt{Map.domain(3)}=$\{102:2:200\}$, \texttt{Map.linearMap(1).gain}=1/2, \texttt{Map.linearMap(1).offset}=0
\end{itemize}
A restriction in the definition of a map is that every domain and its correspondent linear map must be such that the resulting image in each dimension is composed by natural numbers. Thus, when a gain is not an integer number, the corresponding domain and offset cannot be arbitrary. Otherwise, if the gain is integer, the offset must be integer too.

On these maps we also implemented several functions and operators. Among them, we mention the following ones:

\begin{itemize}
 \item \texttt{imageMap} computes the set that is the image of a given set through a given map. Similarly, \texttt{preImageMap} computes the preimage set.
 \item \texttt{compMaps} computes the new map that results from composing two maps ($\map_3=\map_1\circ \map_2$).
 \item \texttt{minMap} computes the minimum map between two maps, i.e., $\map_3(v)=\min(\map_1(v),\map_2(v))$, which can result equal to $\map_1$ in some subdomain, and equal to $\map_2$ in the remaining subdomain. 

 This function requires establishing an ordering between the elements. For one dimensional sets the ordering is that of the natural numbers. For higher dimensional sets, the order between two elements is established at the first dimension in which they differ. This is, we say that $v<w$ if $v_1<w_1$ or $v_1=w_1 \wedge v_2<w_2$, etc.

 \item \texttt{minAdjMap}: Given two maps $\map_1$ and $\map_2$ with the same domain, this function computes a new map $\map_3$ according to Eq.\eqref{eq:minAdjMap}. 
   The computation of the new function is based on the following observation:
   \begin{itemize}
    \item If $\map_1$ is bijective, then $\map_3$ can be computed as $\map_2 \circ \map_1^{-1}$.
    \item If $\map_1$ is constant, then $\map_3$ can be computed as $\map_3(v)=\min(\map_2(e))$ for all $e$ in the domain of the maps.
   \end{itemize}
 Then, the function is implemented computing on each sub-domain and on each dimension of $\map_1$ according to the previous observation. 
 
 \item \texttt{mapInf}: Consider a map $\map_1$ with the following restrictions: 
 \begin{itemize}
  \item All its linear maps have gains (in all the dimensions) that can only take the values 1 and 0.
  \item If a gain is 1, the corresponding offset cannot be positive.
 \end{itemize}
 On this map, this function computes a new map $\map_2$ that is the result of composing $\map_1$ with itself until reaching convergence. The computations are performed without actually iterating on $\map_1$. Instead, it computes the fixed points of the iteration and the maps to those fixed points.
 
 The implementation is based on the following observations:
 \begin{itemize}
   \item A domain where the map has gain 1 and offset 0 remains unchanged after each iteration.
   \item If all domains have gain 0, then the iteration converges after at most $N$ steps where $N$ is the number of domains. 
   \item If a domain has gain 1 and offset -1, then after some iterations of the map it will take a value outside the domain ($\texttt{interval.start}-1$ in fact). Thus we can just replace the gain by 0 and the offset pointing to $\texttt{interval.start}-1$.
   \item If a domain has gain 1 and offset -2, we shall have two arrival points after leaving the domain. So we can split the interval in two intervals with gain 0 and different offset. For larger negative offset values the idea is the same.
 \end{itemize}
\end{itemize}

\subsection{Set--Based Graphs}
Set--Based Graphs are represented by an array of sets \texttt{SBG.setVertex}$(1:n)$ containing the set vertices and an array of set edges \texttt{SBG.setEdge}$(1:m)$.  

Every set-edge contains two integer numbers \texttt{SE.index1}, \texttt{SE.index2} and two maps, \texttt{SE.map1}, \texttt{SE.map2}, with identical domain. The integer numbers represent the position of the set--vertices that are connected by the set edge, and the maps represent the connections between individual vertices. For instance, a set--edge with \texttt{index1}=3 and \texttt{index2}=5 connects the set vertices \texttt{SBG.setVertex}$(3)$ with \texttt{SBG.setVertex}$(5)$. Then, given $h\in$\texttt{SE.map1.domain}, the $h$--th edge of the set--edge connects the vertices \texttt{SE.map1}$(h)$ with \texttt{SE.map2}$(h)$.

On this class, we implemented the function \texttt{connectComp} that computes the connected components of a given SB-Graph. This function returns a map $\Rmap$ as explained in Section~\ref{sec:main}.

\subsection{Implementation Restrictions} \label{sec:restrict}
While Algorithm~\ref{alg:main_map} is general, the implementation described above imposes the following restrictions on the set--based graphs:
\begin{enumerate}
 \item Every individual vertex is represented by an array of natural numbers of dimension $d$.
 \item Every set-vertex is a union of a finite number of intervals of dimension $d$. Every interval in each dimension is defined by three natural numbers: \emph{start}, \emph{step}, and \emph{end}.
 \item The maps that define the set edges $\map^{h,i}:\N^d\to \N^d$ are \emph{piecewise linear}. Each map has a finite number of domains with a corresponding linear affine function. In every domain, the function acting in each dimension is characterized by two rational numbers: the \emph{gain} and the \emph{offset}. 
 \item The implementation of the \texttt{mapInf} function imposes a further restriction to the maps: In a given domain and dimension, if $\map^{h,i}$ and $\map^{h,j}$ have both nonzero gains, then the gains must be the same. Otherwise, function \texttt{minAdjMap} might return a map with some gain that is not $1$ or $0$ and, if that map turns to be less than the identity, then \texttt{mapInf} cannot be applied. 
\end{enumerate}

The last restriction can be easily avoided with a more general implementation of \texttt{mapInf} considering gains different from 1 and 0. 

\section{Application to Connection Flattening} \label{sec:flattening}
In this section we analyze the use of the proposed algorithm in the context of replacing connections by equations in object oriented models.

\subsection{Code Generation}
The original motivation of this work was that of automatically obtaining a code like that of Listing~\ref{list:modelicaEquations} given a set of connections like those of Listing~\ref{list:modelicaCode}. For that goal, we propose the following procedure:
\begin{enumerate}
 \item Build a SB Graph: 
 \begin{itemize}
  \item  Associate a set-vertex to each array of connectors. For the example of Listing~\ref{list:modelicaCode} the set vertices are $S.p$, $S.n$, $G.p$, $R[1:N].p$, $R[1:N].n$, $C[1:N].p$, and $R[1:N].n$.
  \item  Associate a set-edge to each set of connections between every pair of set vertices. In the example some set edges would be 
    \begin{itemize}
      \item $E^1=E^1[S.p,R.p]$, characterized by maps $\map^1_1(e^1_1)=S.p$ and $\map^1_2(e^1_1)=R[1].p$. 
      \item $E^2=E^2[R.n,R.p]$, characterized by maps $\map^2_1(e^2_i)=R[i].n$ and $\map^2_2(e^2_i)=R[i+1].p$ for $i=1,\ldots,N-1$. 
      \item $E^3=E^3[C.n,G.p]$, characterized by maps $\map^3_1(e^3_i)=C[i].n$ and $\map^3_2(e^3_i)=G.p$  for $i=1,\ldots,N$. 
    \end{itemize}
 \end{itemize}
 \item Find the connected components using Algorithm~\ref{alg:connect_SB}.
 \item Given the map $\Rmap$ representing the sets of connected components, write the corresponding equations. 
\end{enumerate}

The last step first splits the domain and image of $\Rmap$ into \emph{atomic} sets, i.e., sets containing a single intervals. That way, the sets can be traversed in the resulting code using \texttt{for} statements. Then, the procedure uses the facts that the image of $\Rmap$ are the representatives of the connected components and that the preimage of each atomic set of the image contains the corresponding connected components. Since the preimage is also split into atomic sets, it can be also traversed using \texttt{for} statements in the resulting code. Then, once the code for traversing the connected components is written, it is simple to add the appropriate code for the effort and flow variables.

\subsection{Analysis of the Restrictions}
The restrictions described in Sec.\ref{sec:restrict} about the implementation and the conditions enumerated after Lemma~\ref{lem:itbound} establishes the circumstances under which the algorithm effectively achieves a constant cost with respect to the number of vertices and edges. While these conditions may be quite restrictive in general, in the context of replacing connections by equations in object oriented models they are almost invariantly satisfied:
\begin{itemize}
 \item The connectors in a model are always instantiated as scalar or arrays with different dimensions. We can represent all of them using arrays of vertices with the maximum dimension found. That way the first two restrictions of Sec.\ref{sec:restrict} are always satisfied.
 \item The third restriction is satisfied provided that:
   \begin{itemize}
     \item In presence of nested \texttt{for loop} statements, the interval of the iterators are independent on each other. This is, we cannot write \texttt{for i in 1:N loop; for j in 1:i loop} since in that case the domain of the maps defining the set edges would not be an interval.
      \item The connections have linear affine operations with each index. This is, we can only have expressions like \texttt{connect(v[a*i+b, c*j+d], w[e*i+f, g*j+h])} where \texttt{i} and \texttt{j} are the nested iterators and \texttt{a, b, c, d, e, f, g, h} are rational constants.
   \end{itemize}
   \item The fourth restriction is satisfied provided that \texttt{a} and \texttt{e} in the previous item are different only if one of them is zero (and the same for \texttt{c} and \texttt{g}).
\end{itemize}

Regarding the conditions listed after Lemma~\ref{lem:itbound} under which the algorithm performs a limited number of iterations, they are automatically satisfied under the assumption that the maps are piecewise linear since in that case any large set of connected connectors will keep a strict ordering.

\section{Examples and Results} \label{sec:examples}
We introduce three examples where we applied the presented algortihm using the Octave implementation described in Section~\ref{sec:implem}. In all cases, the experiments were run on laptop with an Intel i3 core processor running Ubuntu OS.

\subsection{Simple RC Network}
We consider first the example of Listing~\ref{list:modelicaCode} with $N=1000$. The vertices $S.p$, $S.n$, and $G.p$ are represented by numbers $1$, $2$, and $3$, respectively. The vertices $R[1:1000].p$ are represented by numbers $1001$ to $2000$, and vertices $R[1:1000].n$ by numbers $2001$ to $3000$. Similarly, $C[1:1000].p$ are represented by numbers $3001$ to $4000$, and $C[1:1000].n$ are represented by numbers $4001$ to $5000$.

Using Algorithm~\ref{alg:connect_SB}, the map $\Rmap$ results as follows:
\begin{equation*}
 \Rmap(v)=
 \begin{cases}
v  &  \text{if } v \in\{1002:1:2000\}\\
v -1000 & \text{if } v \in\{4000:1:4000\}\\
v -1999 & \text{if } v \in\{3001:1:3999\}\\
 2 & \text{if } v \in\{3:1:3\}\\
 1 & \text{if } v \in\{1001:1:1001\}\\
 2 & \text{if } v \in\{4001:1:5000\}\\
v  & \text{if } v \in\{3000:1:3000\}\\
v -999 & \text{if } v \in\{2001:1:2999\}\\
v  & \text{if } v \in\{2:1:2\}\\
v  & \text{if } v \in\{1:1:1\}\\
 \end{cases}
\end{equation*}
which can be easily verified to be correct. The representative of the connected components are $S.p$ (represented by number $1$), $S.n$ (represented by number $2$) , $R[2:1000].p$ (represented by numbers $1002$ to $2000$), and $R[1000].n$ (represented by number $3000$).

Octave reports $2.42$ seconds to compute the connected components. The algorithm finishes after only one iteration. In order to check that the computation time was independent on $N$ we repeated the calculations for $N=10,000$, and $N=1,000,000$ and the three cases took almost exactly the same time. It is worth mentioning that the Octave is an interpreter, so the time of $2.42$ seconds would be noticeably reduced on a compiled implementation. 

We also implemented in Octave a simple automatic code generator for connected components. In this example, the generated code is shown in Listing~\ref{list:finalEquations}.

\begin{scriptsize}
\begin{lstlisting}[language=Modelica,frame=single,caption=Generated Equations,label=list:finalEquations]
 for i in {[1001:1:1001]}
  effort(i) = effort(1)
end
for i in {[1:1:1]}
  flow(i) + flow(i+1000) = 0
end
for i in {[3:1:3]}
  effort(i) = effort(2)
end
for i in {[4001:1:5000]}
  effort(i) = effort(2)
end
for i in {[2:1:2]}
  flow(i) + flow(i+1) + sum(flow(i1), for i1 in [4001:1:5000]) = 0
end
for i in {[2001:1:2999]}
  effort(i) = effort(i-999)
end
for i in {[3001:1:3999]}
  effort(i) = effort(i-1999)
end
for i in {[1002:1:2000]}
  flow(i) + flow(i+999) + flow(i+1999) = 0
end
for i in {[4000:1:4000]}
  effort(i) = effort(3000)
end
for i in {[3000:1:3000]}
  flow(i) + flow(i+1000) = 0
end
\end{lstlisting} 
\end{scriptsize}

\subsection{RC Network with Recursive Connection}
For the same system of Figure~\ref{fig:rc_network}, we changed the connections as follows:
\begin{scriptsize}
\begin{lstlisting}[language=Modelica,frame=single,caption=Modelica connections,label=list:modelicaCode2]
connect(S.p,R[1].p);
connect(S.n,G.p);
connect(C[1].n,G.p);
for i in 1:N-1 loop
  connect(R[i].n, R[i+1].p);
  connect(C[i+1].n, C[i].n); //recursive connection
end for;
for i in 1:N loop
  connect(C[i].p, R[i].n);
end for;
\end{lstlisting}
\end{scriptsize}
In this case, the algorithm finds the following map of connected components:
\begin{equation*}
 \Rmap(v)=
 \begin{cases}
 2 &\text{if } v \in\{5000:1:5000\}\\
v  &\text{if } v \in\{1002:1:2000\}\\
v -1000 &\text{if } v \in\{4000:1:4000\}\\
v -1999 &\text{if } v \in\{3001:1:3999\}\\
 2 &\text{if } v \in\{3:1:3\}\\
 1 &\text{if } v \in\{1001:1:1001\}\\
 2 &\text{if } v \in\{4001:1:4001\}\\
v  &\text{if } v \in\{3000:1:3000\}\\
v -999 &\text{if } v \in\{2001:1:2999\}\\
v  &\text{if } v \in\{2:1:2\}\\
v  &\text{if } v \in\{1:1:1\}\\
 2 &\text{if } v \in\{4002:1:4999\}\\
 \end{cases}
\end{equation*}
The map is exactly the same as before, but it is now more partitioned in the domain $[4001:5000]$. The presence of the recursive connection on $C.n$ is solved in a single step by the computation of \texttt{mapInf} function.

The time taken by the algorithm in this case is $3.28$ seconds (reported by Octave), and, as before, the algorithm finishes after completing one iteration. The larger time can be explained by the fact that the number of maps is larger than before. 

\subsection{A Two-Dimensional Network}
This example consists of a 2D network formed by $N \times M$ cells with 4 connectors each (left, right, up and down connectors), a ground component with one connector and a source component with two connectors. The network is connected as it is shown in Figure~\ref{fig:2D_network} and expressed in Listing~\ref{list:modelicaCode2D}.

\begin{figure}[h]
 \centering
 \includegraphics[width=7cm]{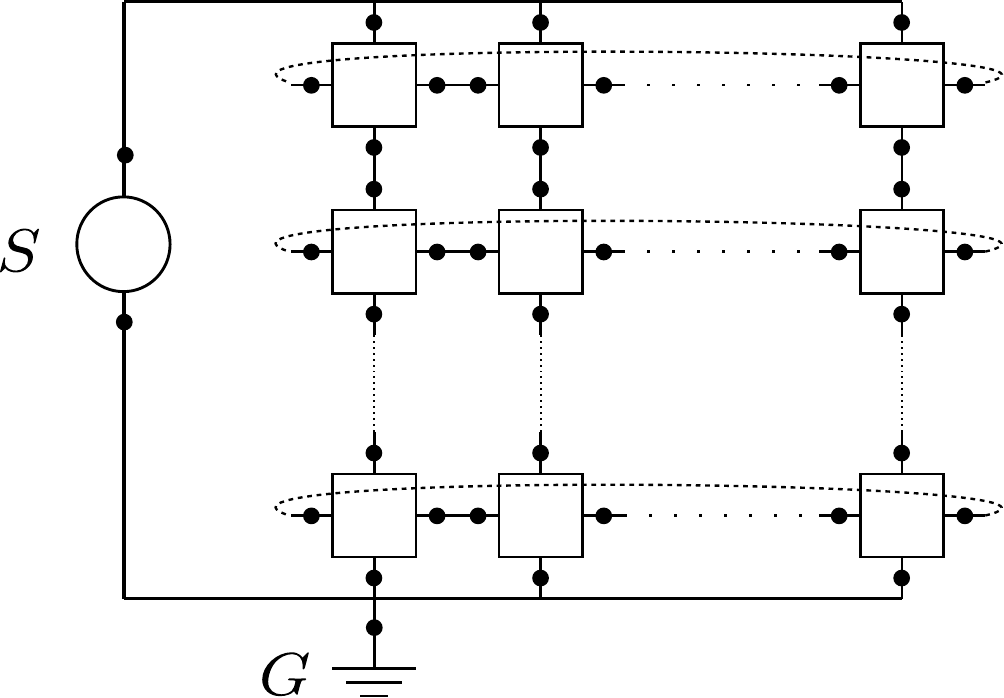}
 \caption{2D Network}
 \label{fig:2D_network}
\end{figure}

\begin{scriptsize}
\begin{lstlisting}[language=Modelica,frame=single,caption=Modelica connections,label=list:modelicaCode2D]
for i in 1:N-1,j in 1:M-1 loop
  connect(Cell[i,j].r, Cell[i,j+1].l);
  connect(Cell[i,j].d, Cell[i+1,j].u);
end for;
for i in 1:N loop
  connect(Cell[i,M].r, Cell[i,1].l);
end for;
for j in 1:M loop
  connect(Cell[1,j].u,S.p);
  connect(Cell[N,j].d,S.n);
end for;
\end{lstlisting}
\end{scriptsize}

In this case, each vertex is represented by two numbers: $S.p$, $S.n$, and $G.p$ are $[1,1]$, $[2,2]$, and $[3,3]$. Cell$[i,j]$.left is represented by $[N+i,M+j]$. Similarly, Cell$[i,j]$.right, Cell$[i,j]$.up, and Cell$[i,j]$.left are represented by $[2 N+i,2 M+j]$, $[3 N+i,3 M+j]$, and $[4 N+i,4 M+j]$ respectively. 

Taking $N=1000$ and $M=100$, for instance, the algorithm finds the following map of connected components:

\begin{equation*}
 \Rmap(v)=
 \begin{cases}
[2;2] & \text{if } v \in \{[3:1:3]\times[3:1:3]\}\\
v  & \text{if } v \in \{[1:1:1]\times[1:1:1]\}\\
v +[-1000;-199] & \text{if } v \in \{[2001:1:3000]\times[300:1:300]\}\\
v +[-999;-100] & \text{if } v \in \{[4001:1:4999]\times[401:1:500]\}\\
v +[-1000;-99] & \text{if } v \in \{[2001:1:3000]\times[201:1:299]\}\\
v  & \text{if } v \in \{[2:1:2]\times[2:1:2]\}\\
[2;2] & \text{if } v \in \{[5000:1:5000]\times[401:1:500]\}\\
[1;1] & \text{if } v \in \{[3001:1:3001]\times[301:1:400]\}\\
v  & \text{if } v \in \{[1001:1:2000]\times[101:1:101]\}\\
v  & \text{if } v \in \{[3002:1:4000]\times[301:1:400]\}\\
v  & \text{if } v \in \{[1001:1:2000]\times[102:1:200]\}\\  
 \end{cases}
\end{equation*}
that can be also verified to be correct. The time reported by Octave is $4.14$ seconds and it is again independent on $N$ and $M$. The code produced is listed below.

\begin{scriptsize}
\begin{lstlisting}[language=Modelica,frame=single,caption=Generated Equations for 2D Network,label=list:finalEquations2D]
for i,j in {[3001:1:3001]x[301:1:400]}
  effort(i,j) = effort(1,1)
end
for i,j in {[1:1:1]x[1:1:1]}
  flow(i,j) + sum(flow(i+3000,j1), for j1 in [301:1:400]) = 0
end
for i,j in {[3:1:3]x[3:1:3]}
  effort(i,j) = effort(2,2)
end
for i,j in {[5000:1:5000]x[401:1:500]}
  effort(i,j) = effort(2,2)
end
for i,j in {[2:1:2]x[2:1:2]}
  flow(i,j) + flow(i+1,j+1) + sum(flow(i+4998,j1), for j1 in [401:1:500]) = 0
end
for i,j in {[2001:1:3000]x[300:1:300]}
  effort(i,j) = effort(i-1000,101)
end
for i,j in {[1001:1:2000]x[101:1:101]}
  flow(i,j) + flow(i+1000,j+199) = 0
end
for i,j in {[2001:1:3000]x[201:1:299]}
  effort(i,j) = effort(i-1000,j-99)
end
for i,j in {[1001:1:2000]x[102:1:200]}
  flow(i,j) + flow(i+1000,j+99) = 0
end
for i,j in {[4001:1:4999]x[401:1:500]}
  effort(i,j) = effort(i-999,j-100)
end
for i,j in {[3002:1:4000]x[301:1:400]}
  flow(i,j) + flow(i+999,j+100) = 0
end
\end{lstlisting} 
\end{scriptsize}

\section{Conclusions and Future Research} \label{sec:conclusions}
We presented a novel algorithm for finding connected components in undirected graph that, under certain regularity assumptions, has constant computational costs with the number of vertices and edges. This is achieved using the concept of \emph{Set-Based Graphs} and, to the best of our knowledge, constitutes the first algorithm of this type.

We described also a prototype implementation of the algorithm and its application to connection flattening in object oriented models, a field in which it is very common that the regularity assumptions are accomplished. In addition, we demonstrated the usefulness and the functionality of the algorithm through three examples of large scale graphs, including a two-dimensional case.

We believe this work opens several future lines of work and research. The implementation itself is a simple prototype in a high level interpreted language, so we are currently working on implementing the algorithm in ModelicaCC compiler \cite{bergero2015efficient} in C++ language. In addition, we are also working on developing more algorithms of this type (using SB-Graphs with maps) for other problems related to Modelica compilation: finding maximum matching in bipartite graphs and strongly connected components (directed graphs). These problems were already solved using SB-Graphs in \cite{ZFK19} but the solution was quite complicated and not as general as the one found here using maps for representing set-edges. Another related problem that we are trying to solve using SB-Graphs is that of producing the code for computing the sparse Jacobian matrix in large systems of differential algebraic equations.

Besides these new problems, there are several issues related to the algorithm presented here that should be taken into account in the future. Among them, it would be important to establish some bounds on the cost of every step of the algorithm with respect to the number of different linear maps that are used to describe each map. In addition, we need to find less restrictive conditions under which the algorithm actually has a constant cost with respect to the size of the sets.
 
Another important goal is that of implementing these algorithms in a more robust and complete Modelica compiler such as OpenModelica \cite{fritzson2006openmodelica}.

Finally, we believe that this algorithm can be effectively applied in other fields beyond object oriented models. Any problem leading to analysis on a large graph containing some regular connections is in principle a good candidate to be solved using SB-Graphs.

The Octave library containing the algorithm, the functions and the examples presented in this article can be downloaded from 
\url{https://www.fceia.unr.edu.ar/~kofman/files/SBGraphs.zip}.

\section*{Funding}
This work was partially funded by grant PICT--2017 2436 (ANPCYT). 

\section*{References}
\bibliography{references}

\begin{thebibliography}{10}
\expandafter\ifx\csname url\endcsname\relax
  \def\url#1{\texttt{#1}}\fi
\expandafter\ifx\csname urlprefix\endcsname\relax\def\urlprefix{URL }\fi
\expandafter\ifx\csname href\endcsname\relax
  \def\href#1#2{#2} \def\path#1{#1}\fi

\bibitem{hopcroft1973algorithm}
J.~Hopcroft, R.~Tarjan, Algorithm 447: efficient algorithms for graph
  manipulation, Communications of the ACM 16~(6) (1973) 372--378.

\bibitem{hirschberg1979computing}
D.~S. Hirschberg, A.~K. Chandra, D.~V. Sarwate, Computing connected components
  on parallel computers, Communications of the ACM 22~(8) (1979) 461--464.

\bibitem{fritzson1998modelica}
P.~Fritzson, V.~Engelson, Modelica—a unified object-oriented language for
  system modeling and simulation, in: European Conference on Object-Oriented
  Programming, Springer, 1998, pp. 67--90.

\bibitem{arzt2014towards}
M.~Arzt, V.~Waurich, J.~Wensch, Towards utilizing repeating structures for
  constant time compilation of large modelica models, in: Proceedings of the
  6th International Workshop on Equation-Based Object-Oriented Modeling
  Languages and Tools, 2014, pp. 35--38.

\bibitem{schuchart2015exploiting}
J.~Schuchart, V.~Waurich, M.~Flehmig, M.~Walther, W.~E. Nagel, I.~Gubsch,
  Exploiting repeated structures and vectorization in modelica, in: Proceedings
  of the 11th International Modelica Conference, Versailles, France, September
  21-23, 2015, no. 118, Link{\"o}ping University Electronic Press, 2015, pp.
  265--272.

\bibitem{casella2015simulation}
F.~Casella, Simulation of large-scale models in modelica: State of the art and
  future perspectives, in: 11th International Modelica Conference, 2015, pp.
  459--468.

\bibitem{bergero2015efficient}
E.~F. Bergero, M.~Botta, E.~Kofman, Efficient compilation of large scale
  modelica models, in: 11th International Modelica Conference, 2015.

\bibitem{stavaaker2015contributions}
K.~Stav{\aa}ker, Contributions to simulation of modelica models on
  data-parallel multi-core architectures, Ph.D. thesis, Link{\"o}ping
  University Electronic Press (2015).

\bibitem{braun2017solving}
W.~Braun, F.~Casella, B.~Bachmann, et~al., Solving large-scale modelica models:
  new approaches and experimental results using openmodelica, in: 12
  International Modelica Conference, Linkoping University Electronic Press,
  2017, pp. 557--563.

\bibitem{qin2016efficient}
X.~Qin, J.~Tang, Y.~Feng, B.~Bachmann, P.~Fritzson, Efficient index reduction
  algorithm for large scale systems of differential algebraic equations,
  Applied Mathematics and Computation 277 (2016) 10--22.

\bibitem{agosta2019towards}
G.~Agosta, E.~Baldino, F.~Casella, S.~Cherubin, A.~Leva, F.~Terraneo, Towards a
  high-performance modelica compiler, in: Proceedings of the 13th International
  Modelica Conference, Regensburg, Germany, March 4--6, 2019, no. 157,
  Link{\"o}ping University Electronic Press, 2019.

\bibitem{schweiger2020modeling}
G.~Schweiger, H.~Nilsson, J.~Schoeggl, W.~Birk, A.~Posch, Modeling and
  simulation of large-scale systems: A systematic comparison of modeling
  paradigms, Applied Mathematics and Computation 365 (2020) 124713.

\bibitem{ZFK19}
P.~Zimmermann, J.~Fern{\'a}ndez, E.~Kofman, Set-based graph methods for fast
  equation sorting in large dae systems, in: Proceedings of the 9th
  International Workshop on Equation-based Object-oriented Modeling Languages
  and Tools, 2019, pp. 45--54.

\bibitem{eaton1997gnu}
J.~W. Eaton, D.~Bateman, S.~Hauberg, Gnu octave, Network thoery London, 1997.

\bibitem{Fritzson2015}
P.~Fritzson, {Principles of Object-Oriented Modeling and Simulation with
  Modelica 3.3: a Cyber-Physical Approach"}, Wiley-IEEE Press, 2015.

\bibitem{fritzson2006openmodelica}
P.~Fritzson, P.~Aronsson, A.~Pop, H.~Lundvall, K.~Nystrom, L.~Saldamli,
  D.~Broman, A.~Sandholm, Openmodelica-a free open-source environment for
  system modeling, simulation, and teaching, in: 2006 IEEE Conference on
  Computer Aided Control System Design, IEEE, 2006, pp. 1588--1595.

\bibitem{BEMH02}
D.~Br{\"u}ck, H.~Elmqvist, S.~E. Mattsson, H.~Olsson, {Dymola for
  multi-engineering modeling and simulation}, in: Proceedings of Modelica 2002,
  2002.

\bibitem{aakesson2009jmodelica}
J.~{\AA}kesson, M.~G{\"a}fvert, H.~Tummescheit, Jmodelica---an open source
  platform for optimization of modelica models, in: 6th Vienna International
  Conference on Mathematical Modelling, 2009.

\bibitem{pop2019new}
A.~Pop, P.~{\"O}stlund, F.~Casella, M.~Sj{\"o}lund, R.~Franke, A new
  openmodelica compiler high performance frontend, in: Proceedings of the 13th
  International Modelica Conference, Regensburg, Germany, March 4--6, 2019, no.
  157, Link{\"o}ping University Electronic Press, 2019.

\end{thebibliography}

\end{document}